\newcommand{\be}{\begin{equation}}
\newcommand{\ee}{\end{equation}}
\newcommand{\ba}{\begin{eqnarray}}
\newcommand{\ea}{\end{eqnarray}}
\newtheorem{theorem}{Theorem}
\newtheorem{corollary}{Corollary}
\newtheorem{definition}{Definition}
\def\>{\rangle}
\def\<{\langle}
\begin{document}

\title{Resource dependent undecidability: computability landscape of distinct Turing theories}

\author{Airin Antony}
\affiliation{School of Physics, IISER Thiruvananthapuram, Vithura, Kerala 695551, India.}

%\date{\today}

\begin{abstract}
Can a problem undecidable with classical resources be decidable with quantum ones? The answer expected is no; as both being Turing theories, they should not solve the Halting problem - a problem unsolvable by any Turing machine. Yet, we provide an affirmative answer to the aforesaid question. We come up with a novel logical structure to formulate infinitely many such problems for any pair of distinct Turing theories, including but not limited to the classical and quantum theories. Importantly, a class of other decision problems, such as the Halting one, remains unsolvable in all those theories. The apparent paradoxical situation gets resolved once it is perceived that the reducibility of Halting problem changes with varying resources available for computations in different theories. In the end, we propose a multi-agent game where winnability of the player having access to only classical resources is undecidable while quantum resources provide a perfect winning strategy.
\end{abstract}

\maketitle
\section{Introduction}
The most astonishing results in quantum information theory have consistently demonstrated the striking advantages of quantum resources over their classical counterparts in different tasks \cite{Bennett92,Deutsch1992,Bennett93,Shor94,Grover96,Cleve1998,Dale15}. For instance, recently, a specific computational problem has been shown to be solvable more efficiently by a quantum algorithm than any futuristic classical algorithm \cite{Raz19}. Moving to the paradigm of computability, here we present one such counter-intuitive extreme scenario. For any two distinct physical theories, including but not limited to the classical and quantum ones, we show that infinitely many decision problems can be constructed that are undecidable in one theory while decidable in the other, although a class of other decision problems remain undecidable in both.

Undecidable problems are the decision problems (yes-or-no questions on an infinite set of inputs) that no algorithm, or Turing machine, can solve in finite time for all inputs \cite{undecidabilitybook}. The celebrated `Halting problem' ($\mathcal{H}$), which asks whether a given program eventually halts on a given input, was shown to be undecidable by Turing as early as in $1936$ \cite{Turing37}. A plethora of other mathematical problems soon followed, their undecidability demonstrated via reduction: them being decidable would imply that $\mathcal{H}$ is decidable, a contradiction \cite{Poonen, davisundecidableexamples}. Lately,  undecidability has garnered more attention among physicists, as certain long-standing problems in physics with foundational as well as real-world implications were finally proved to be undecidable \cite{Deutsch85,Wolfram85,Costa91,Cubitt15,Slofstra2019,Fritz2020,Bausch20,Shiraishi21,scandi2021}. One such recent development involves identifying the decision problems perfectly decidable in classical scenarios but undecidable in analogous quantum settings. The pioneering effort in this direction by Eisert-Müller-Gogolin demonstrated that the quantum measurement occurrence problem (QMOP) is undecidable despite the decidability of its classical analogue, the classical measurement occurrence problem (CMOP) \cite{Eisert12}. Similarly, a decidable problem concerning goal POMDPs (partially observable Markov decision processes) was found to be  undecidable for goal QOMDPs, their quantum analogues \cite{Barry14}. Such approaches are quite fascinating from a physical computation perspective. The classical and quantum theories are widely believed to qualify for what we shall define as the Turing theories and cannot solve undecidable problems such as $\mathcal{H}$ \cite{Cotogno03,Galton06,ARRIGHI12}. Nonetheless, in principle, there exist logically well-defined but abstract ‘oracle theories’ capable of solving $\mathcal{H}$ \cite{Abramson71,Stewart91,Copeland02}. With access to seemingly unphysical and unattainable resources, converting a problem from undecidable to decidable is less appealing. What makes Eisert {\it et al}'s approach interesting is that one such undecidable problem defined within the constructs of a standard Turing theory (quantum) had a perfectly decidable counterpart within another (classical).

Here, we stretch these concepts to their extremes by enquiring whether similar results hold for the same decision problem rather than for analogous ones. Specifically, consider the following scenario: within the constructs of some Turing theory $\mathrm{R}_1$ (say, the classical theory), let the problem $\mathcal{H}$ be reducible to another problem $\mathcal{D}$. However, $\mathcal{D}$ is shown to be decidable with access to a different physical theory $\mathrm{R}_2$. {\it Does this necessarily imply that $\mathrm{R}_2$ is an oracle theory?} Quite counter-intuitively, we demonstrate instances where $\mathrm{R}_2$ need not contain any oracle resource. In fact, we provide a simple but quite general logical structure to construct infinitely many such problems for any two distinct Turing theories. This may seem paradoxical at first glance since undecidable problems are not expected to be solved without access to an oracle. The paradox gets resolved by realizing that the Halting problem’s reducibility is not guaranteed when accessible resources differ with varying physical theories. Therefore, even though the resources of the latter theory make formerly undecidable problems decidable, they are still insufficient to solve $\mathcal{H}$ itself. Interestingly, we show that the results of Eisert {\it et al.} \cite{Eisert12} and Barry {\it et al.} \cite{Barry14} fit within our logical structure as two particular examples. Towards the end, we illustrate an explicit game involving  multiple agents (one player, two verifiers, and a referee) where the player having access to classical resources can always win the game only if she has some oracle resource. On the other hand, quantum resources yield a perfect winning strategy without requiring any oracle. Our results, thus, establish a previously unexplored fundamental relationship between undecidability and resource availability in different theories. %In the following, we start by discussing some preliminary notions. 

\section{Preliminaries}
{\it Physical theory.--} Multiple frameworks have been proposed to analyse the operational features of arbitrary physical theories. For instance, within the framework of generalized probability models \cite{Hardy01,Barrett07,Chiribell11}, a theory is characterized by specifying the structure of allowed states, possible set of measurements ({\it i.e.} the structure of effect space), and the possible set of transformations. On the other hand, more recent development of the resource theory framework helps to investigate an agent’s operational capabilities under a set of allowed states and operations \cite{Chitambar19}. %While this versatile framework has been successfully applied to characterize several quantum resources (see references in \cite{Chitambar19}), within the language of generalized probability theory \cite{Hardy01,Barrett07,Chiribell11}, classical theory and quantum theory as a whole can be seen as different resource theories.
Following their general structure, we shall characterize a physical theory $\mathrm{T}$ by the tuple $\mathrm{T} \equiv \{\mathrm{B},\mathrm{F}\}$, where $\mathrm{B}$ is the set of available states and $\mathrm{F}$ is the set of allowed physical operations (transformations and measurements). Within the language of generalized probability theory \cite{Hardy01,Barrett07,Chiribell11}, classical and quantum theories can be seen as different physical theories. Additionally, this framework can be extended to any well-defined sets of states and operations (resources). Few definitions relevant to the computational aspect of a physical theory are in order.
\begin{definition}
[Distinctiveness] Two physical theories will be called distinct if some yes-no question $\mathbf{Q}$ is affirmative in one physical theory but negative in the other. 
\end{definition}
Operationally, this corresponds to some physical scenario possible in one theory but not in the other. For instance, any quantum advantage over the classical resources can be put forward as such a yes-no question with quantum theory having the affirmative answer. Note that, $\mathbf{Q}$ can be a meaningful question in the context of a physical theory only when it is defined using the resources (states and operations) of the theory (see Appendix-\ref{appen-a}). The query, "does any allowed operation transform $\rho$ to $\sigma$?", is a valid question across all physical theories containing states $\rho$ and $\sigma$, even though the solution can change across the theories.
\begin{definition}
[Turing theory] A physical theory will be called Turing theory if its computability strength is exactly the same as that of the Turing machine. 
\end{definition}
This means that, the computation by any Turing machine can be simulated by the states and operations of the Turing theory and vice-versa (see Appendix-\ref{appen-a}). %Any physical computer built within the constructs of a Turing resource theory can only solve all those problems that can be solved by a Turing machine.
Thus, questions undecidable to the Turing machine, such as $\mathcal{H}$, cannot be solved in any Turing theory. Both  classical theory and quantum theory are considered to be examples of Turing theories \cite{Cotogno03,Galton06,ARRIGHI12}.
\begin{definition}
[Oracle theory] A physical theory will be called an oracle theory if any computing model within its constructs can solve the Halting problem. 
\end{definition}
For instance, consider the accelerating Turing machine that can perform infinite computation in finite time by executing each step twice as fast as before \cite{Stewart91,Copeland02}; or the extended Turing machines that can store and compute arbitrary real numbers, something a Turing machine cannot \cite{Abramson71}.  Although they are purely imaginary and less appealing, we shall utilize their hypothetical structure in the discussion that follows and proceed to the next section containing the technical results.

\section{Results}
\begin{table}[t!]
\centering
\begin{tabular}{ cc }   
$\mathcal{D}\equiv\{D_i:=\mathbf{Q}\vee H_i\}$ ~~~~~~~~& $\tilde{\mathcal{D}}\equiv\{\tilde{D}_i:=\neg\mathbf{Q}\vee H_i\}$\\  
\begin{tabular}{ c|c|c } 
\hline
$\mathbf{Q}$ & $H_i$ & $D_i$\\
\hline\hline
~~~yes~~~ & ~~~yes ~~~&~~~ yes~~~ \\
\hline
~~yes~~ & ~~no ~~&~~ yes~~ \\
\hline
~~no~~ & ~~yes ~~&~~ yes~~ \\
\hline
~~no~~ & ~~no ~~&~~ no~~ \\
\hline
\end{tabular} ~~~~~~~~&  % starting rightmost sub table
\begin{tabular}{ c|c|c } 
\hline
$\neg\mathbf{Q}$ & $H_i$ & $\tilde{D}_i$\\
\hline\hline
~~~no~~~ & ~~~yes ~~~&~~~ yes~~~ \\
\hline
~~no~~ & ~~no ~~&~~ no~~ \\
\hline
~~yes~~ & ~~yes ~~&~~ yes~~ \\
\hline
~~yes~~ & ~~no ~~&~~yes~~ \\
\hline
\end{tabular} \\
\end{tabular}
\caption{Resource dependent undecidability: the problem $\mathcal{D}$ is trivially decidable in $\mathrm{T}_2$, while it boils down to solving the problem $\mathcal{H}$ in $\mathrm{T}_1$. On the other hand, the problem $\tilde{\mathcal{D}}$ is trivially decidable in $\mathrm{T}_1$, but boils down to $\mathcal{H}$ in $\mathrm{T}_2$. The statement $\mathrm{Q}$ is true in $\mathrm{T}_2$ but false in $\mathrm{T}_1$.}
\label{tab1}
\end{table}
\begin{theorem}\label{theo1}
For any pair of distinct Turing theories $\mathrm{T}_1$ and $\mathrm{T}_2$, there exists at least one decision problem, undecidable in $\mathrm{T}_1$ but trivially decidable in $\mathrm{T}_2$.
\end{theorem}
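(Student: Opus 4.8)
The plan is to exploit the definition of distinctiveness directly and to build the required decision problem as a logical disjunction between a distinguishing yes--no question and an enumeration of the Halting problem. Since $\mathrm{T}_1$ and $\mathrm{T}_2$ are distinct, there is a yes--no question $\mathbf{Q}$, phrased entirely in terms of states and operations common to both theories, whose answer is ``no'' in $\mathrm{T}_1$ and ``yes'' in $\mathrm{T}_2$ (if the roles are reversed, swap the labels). I would then let $\{H_i\}_{i\in\N}$ be an enumeration of the instances of $\mathcal{H}$ (e.g.\ $H_i$ asks whether the $i$-th machine halts on the $i$-th input), and define the decision problem $\mathcal{D}\equiv\{D_i\}$ by $D_i := \mathbf{Q}\vee H_i$, exactly as in Table~\ref{tab1}.

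The core of the argument is the truth-table analysis displayed in Table~\ref{tab1}. In $\mathrm{T}_2$, the answer to $\mathbf{Q}$ is ``yes'', so every $D_i$ evaluates to ``yes'' regardless of $H_i$; hence a Turing machine in $\mathrm{T}_2$ can decide $\mathcal{D}$ by the trivial algorithm that always outputs ``yes'', which is why $\mathcal{D}$ is trivially decidable there. In $\mathrm{T}_1$, the answer to $\mathbf{Q}$ is ``no'', so $D_i$ is logically equivalent to $H_i$; deciding $\mathcal{D}$ in $\mathrm{T}_1$ would therefore decide $\mathcal{H}$. Since $\mathrm{T}_1$ is a Turing theory, by the definition of Turing theory it cannot solve $\mathcal{H}$, so $\mathcal{D}$ is undecidable in $\mathrm{T}_1$. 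This establishes the theorem, and in fact also yields the ``only one'' half since the construction produces one explicit problem.

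Two points need care and I would spell them out. First, I must check that $\mathcal{D}$ is a legitimate decision problem \emph{in each theory}: its instances must be specifiable using the resources of the theory. This is where the remark after Definition~1 and Appendix~\ref{appen-a} are invoked --- $\mathbf{Q}$ is chosen to be a meaningful query in both theories (it refers only to states/operations present in both), and the $H_i$ are standard Turing-machine instances, which a Turing theory represents by assumption. Second, I should be explicit that ``trivially decidable'' is meant in the strong sense that a constant machine decides it in $\mathrm{T}_2$, so no hidden oracle is smuggled in; the asymmetry between the two theories comes \emph{solely} from the differing truth value of $\mathbf{Q}$, not from any extra computational power.

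The main obstacle is not computational but definitional: one must argue convincingly that the \emph{same} problem $\mathcal{D}$ is a well-posed decision problem in both $\mathrm{T}_1$ and $\mathrm{T}_2$, so that ``undecidable in $\mathrm{T}_1$'' and ``decidable in $\mathrm{T}_2$'' are statements about a single object rather than about two superficially similar problems. I expect to handle this by insisting that $\mathbf{Q}$ be formulated purely in the shared operational vocabulary of the two theories (as in the $\rho\to\sigma$ example in the text), so that the family $\{D_i\}$ has an identical syntactic description in both, while only the answers differ. Everything else --- the disjunction semantics, the reduction to $\mathcal{H}$, the constant-algorithm decidability --- is immediate from the truth table.
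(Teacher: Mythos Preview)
Your proposal is correct and follows essentially the same route as the paper: define $D_i:=\mathbf{Q}\vee H_i$ using a distinguishing question $\mathbf{Q}$ that is false in $\mathrm{T}_1$ and true in $\mathrm{T}_2$, then read off undecidability in $\mathrm{T}_1$ (where $D_i\Leftrightarrow H_i$) and trivial decidability in $\mathrm{T}_2$ (where $D_i$ is always ``yes'') from the truth table. Your additional remarks on the well-posedness of $\mathcal{D}$ in both theories are a welcome elaboration but do not change the argument.
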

\begin{proof}
Let $\mathcal{H}\equiv\{H_i\}$ be the set of all questions of the halting problem with $i$ denoting the instances. $\mathcal{H}$ is undecidable in $\mathrm{T}_1$ and $\mathrm{T}_2$ since both of them are Turing theories. Furthermore, $\mathrm{T}_1$ and $\mathrm{T}_2$ being distinct, there exist some (at least one) yes-or-no problem $\mathbf{Q}$ whose answer is negative in $\mathrm{T}_1$ but  affirmative in $\mathrm{T}_2$. Now, consider a new decision problem $\mathcal{D}\equiv\{D_i\}$ whose questions are constructed from the conjunction of $\mathcal{H}$ and $\mathbf{Q}$. The instance $D_i$ of the problem $\mathcal{D}$ is true if and only if either $\mathbf{Q}$ or the $i^{th}$ instance of $\mathcal{H}$ is true, {\it i.e.} $D_i:=\mathbf{Q}\vee H_i,~\forall~i$. 
Since answer of $\mathbf{Q}$ is always negative in $\mathrm{T}_1$, the instance $D_i$ can have an affirmative answer in $\mathrm{T}_1$ only if $H_i$ is affirmative in $\mathrm{T}_1$; and accordingly the problem $\mathcal{D}$ boils down to the problem $\mathcal{H}$ and is hence undecidable. On the other hand, $\mathcal{D}$ is trivially decidable in $\mathrm{T}_2$ since $\mathbf{Q}$ is always true there, although $\mathcal{H}$ is still undecidable in $\mathrm{T}_2$.
\end{proof}
Note that the role of $\mathrm{T}_1$ and $\mathrm{T}_2$ in Theorem \ref{theo1} can be interchanged just by replacing $\mathbf{Q}$ with $\neg\mathbf{Q}$. This assertion is made clear in Table \ref{tab1}. A limitation of Theorem \ref{theo1} is that the problem $\mathcal{D}$ is trivially decidable in $\mathrm{T}_2$, {\it i.e.} the solution remains the same for all instances (affirmative in our case). Our next result overcomes this limitation.
\begin{theorem}\label{theo2}
For any pair of distinct Turing theories $\mathrm{T}_1$ and $\mathrm{T}_2$, there exist infinitely many decision problems undecidable in $\mathrm{T}_1$ but non-trivially decidable in $\mathrm{T}_2$.
\end{theorem}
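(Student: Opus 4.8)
The plan is to upgrade the construction of Theorem~\ref{theo1} so that the ``easy'' theory $\mathrm{T}_2$ is made to answer a genuinely instance-dependent family of questions rather than a constant one. First I would recall that, by the distinctness hypothesis, there is a yes-no question $\mathbf{Q}$ that is affirmative in $\mathrm{T}_2$ but negative in $\mathrm{T}_1$, and that $\mathcal{H}=\{H_i\}$ is undecidable in both theories. Rather than forming $D_i:=\mathbf{Q}\vee H_i$, I would introduce an auxiliary \emph{decidable} predicate: fix any decision problem $\mathcal{P}=\{P_i\}$ that is decidable in every Turing theory (for concreteness, something like ``is $i$ even?'' or any recursive predicate on the instance index), chosen so that both the $P_i$-true set and the $P_i$-false set are infinite. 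Then define the family
\begin{equation}
E_i \;:=\; \big(\mathbf{Q}\wedge P_i\big)\;\vee\; H_i .
\end{equation}
In $\mathrm{T}_2$, where $\mathbf{Q}$ holds, $E_i$ reduces to $P_i\vee H_i$; on the infinitely many instances with $P_i$ true this is affirmative outright, while on the instances with $P_i$ false it still collapses to $H_i$ --- so to make $\mathcal{E}$ fully decidable in $\mathrm{T}_2$ I would instead use $E_i:=(\mathbf{Q}\wedge P_i)\vee(\neg\mathbf{Q}\wedge H_i)$, or more transparently let $\mathrm{T}_2$'s answer be dictated entirely by a decidable predicate while $\mathrm{T}_1$'s answer is forced back onto $\mathcal{H}$. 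Concretely I would set $E_i:=(\mathbf{Q}\wedge P_i)\vee(\neg\mathbf{Q}\wedge H_i)$: in $\mathrm{T}_1$ this equals $H_i$ (undecidable), and in $\mathrm{T}_2$ it equals $P_i$ (decidable and non-constant since $\mathcal{P}$ takes both truth values infinitely often).

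The second step is to manufacture \emph{infinitely many} such problems rather than one. Here I would exploit the freedom in the choice of the decidable ``selector'' $\mathcal{P}$: any recursive subset $S\subseteq\N$ with $S$ and $\N\setminus S$ both infinite yields a distinct problem $\mathcal{E}^{S}=\{E_i^{S}\}$ with $E_i^{S}:=(\mathbf{Q}\wedge [i\in S])\vee(\neg\mathbf{Q}\wedge H_i)$, and there are countably infinitely many such $S$ giving pairwise inequivalent decision problems (they differ already on the $\mathrm{T}_2$-side, where $\mathcal{E}^{S}$ decides membership in $S$). Each one is, by the same one-line argument as Theorem~\ref{theo1}, undecidable in $\mathrm{T}_1$ (it is literally $\mathcal{H}$ there) and decidable in $\mathrm{T}_2$ (it is the recursive predicate $[\,\cdot\in S]$ there), and now non-trivially so. One should also note that $\mathrm{T}_2$ being a Turing theory is used precisely to guarantee that the recursive predicate $[\,\cdot\in S\,]$ \emph{is} computable there, so ``non-trivially decidable'' is legitimate; and $\mathcal{H}$ remains undecidable in $\mathrm{T}_2$ as before, preserving the paradox-flavoured reading.

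I would then close by observing that the roles of $\mathrm{T}_1,\mathrm{T}_2$ can again be swapped via $\mathbf{Q}\mapsto\neg\mathbf{Q}$, mirroring the remark after Theorem~\ref{theo1}, so the construction is symmetric, and by remarking how the Eisert--Müller--Gogolin and Barry \emph{et al.} examples slot in as concrete realizations of $\mathbf{Q}$ and $\mathcal{P}$. The main obstacle I anticipate is not any deep computability fact --- the reductions are immediate --- but rather the \emph{bookkeeping of what ``decidable'' means for a family whose very phrasing mentions $\mathbf{Q}$}: one must be careful that the question $E_i$ is a legitimate question \emph{in each theory} (cf.\ the earlier caveat that $\mathbf{Q}$ must be expressible using that theory's resources), and that ``trivially'' versus ``non-trivially'' decidable is pinned down by exhibiting the explicit non-constant recursive behaviour on the $\mathrm{T}_2$-side. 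Making those two points watertight, rather than any calculation, is where the care goes.
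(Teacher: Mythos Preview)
Your final construction $E_i^{S}:=(\mathbf{Q}\wedge [i\in S])\vee(\neg\mathbf{Q}\wedge H_i)$ is correct and is exactly the paper's approach: the paper writes $D_i[f]:=(\neg\mathbf{Q}\wedge H_i)\vee(\mathbf{Q}\wedge H_{f(i)})$ with $\{H_{f(i)}\}$ a non-trivially decidable subfamily of halting instances, which is the same logical skeleton with your recursive predicate $[i\in S]$ replaced by the decidable predicate $H_{f(i)}$, and infinitude obtained by varying $f$ just as you vary $S$. The only cosmetic difference is that the paper keeps the decidable component phrased as halting-problem instances (which dovetails with the later QMOP/CMOP corollary), whereas you use an arbitrary recursive selector; your initial detour through $(\mathbf{Q}\wedge P_i)\vee H_i$ should simply be dropped in a clean write-up.
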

\begin{proof}
Let ${\mathcal{H}_f}\equiv\{{H}_{f(i)}\}$ be any non-trivially decidable subset of $\mathcal{H}$, where ${f}$ is a surjective function mapping the halting problem instances $\{i\}$ to the constructed problem’s instances \{$f(i)$\}. Now, consider the decision problem ${\mathcal{D}[f]}\equiv\{{D}_i[f]\}$, where,
\begin{equation} \label{eu1}
{D}_i[f]:=\left(\neg\mathbf{Q} \, \wedge \, H_i\right)~\vee~\left(\mathbf{Q} \, \wedge \, H_{f(i)}\right),~\forall~i.
\end{equation}
Once again, since $\mathbf{Q}$ is false in $\mathrm{T}_1$ but true in $\mathrm{T}_2$, the problem $\mathcal{D}[f]$ boils down to $\mathcal{H}$ in $\mathrm{T}_1$, whereas it boils down to the non-trivially decidable problem ${\mathcal{H}_f}$ in $\mathrm{T}_2$. Varying the surjective mapping $f$, infinitely many such decision problems can be constructed. 
\end{proof}
The above theorems, thus, provide  an illuminating picture regarding the computability landscape of distinct Turing theories (see Fig.\ref{fig1}). What we will show next is that the recent results of Eisert {\it et al.} \cite{Eisert12} and Barry {\it et al.} \cite{Barry14} can be seen as special instances of the above theorem with $\mathrm{T}_1$ being the quantum theory and $\mathrm{T}_2$ the classical theory.
\begin{corollary}
The undecidability-decidability status of the QMOP-CMOP analogues and the QOMDP-POMDP goal state reachability follow the structure of Theorem \ref{theo2}.
\end{corollary}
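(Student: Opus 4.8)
The plan is to realise the two cited dichotomies as explicit instances of Theorem~\ref{theo2} by exhibiting, in each case, the three ingredients that appear in Eq.~\eqref{eu1}: the distinguishing statement $\mathbf{Q}$, the identification of instances with halting instances $H_i$, and the surjection $f$. Throughout I take $\mathrm{T}_1$ to be the quantum theory and $\mathrm{T}_2$ the classical theory (as announced before the corollary), and I take $\mathbf{Q}$ to be any operational statement that is true in classical theory but false in quantum theory — for instance ``every pair of sharp measurements is jointly measurable'', or the existence of a non-contextual/local-hidden-variable description. With this choice $\mathbf{Q}$ is false in $\mathrm{T}_1$ and true in $\mathrm{T}_2$, so Eq.~\eqref{eu1} collapses to $H_i$ in the quantum reading and to $H_{f(i)}$ in the classical reading, which is exactly the shape required by Theorem~\ref{theo2}; everything then reduces to matching $H_i$ and $f$ to the data of the QMOP/CMOP pair and of the QOMDP/POMDP pair.

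For the measurement-occurrence case, recall that an instance is a tuple of operators together with a distinguished outcome, read as Kraus operators of a quantum instrument in the quantum theory (QMOP) and as substochastic matrices in the classical theory (CMOP), with the question being whether the distinguished outcome can occur after some finite measurement sequence. First I would invoke the Eisert--M\"uller--Gogolin many-one reduction \cite{Eisert12} from (a variant of) the halting problem to QMOP; composing it with a fixed computable enumeration of halting instances attaches to each instance a label $H_i$, so that the quantum reading of $\mathcal{D}[f]$ is literally QMOP and ``boils down to $\mathcal{H}$''. Next, using that CMOP is decidable, I would define $f$ by computing the CMOP answer of the $i$-th instance and letting $f(i)$ point to a fixed trivially-halting program when that answer is ``yes'' and to a fixed never-halting program when it is ``no''; then $\{H_{f(i)}\}$ is a decidable sub-collection of $\mathcal{H}$ that takes both truth values (since CMOP has instances of both kinds), hence is non-trivially decidable, and the classical reading of $\mathcal{D}[f]$ reproduces CMOP. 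The QOMDP/POMDP case goes through verbatim: an instance is a partially observable decision process with a designated goal state, the question is whether some policy reaches the goal with certainty, the Barry--Barry--Aaronson reduction \cite{Barry14} from the halting problem to goal-QOMDP supplies the $H_i$ labelling, and decidability of goal-POMDP reachability supplies $f$ by the same ``route yes/no to fixed representatives'' device.

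The main obstacle I anticipate is not any individual step but pinning down the sense in which QMOP and CMOP — treated in the literature only as ``analogous'' — are two readings of a single resource-neutral decision problem, which is what makes Theorem~\ref{theo2} genuinely applicable rather than a mere informal analogy; this forces one to phrase the occurrence question purely in terms of ``the measurement processes the theory makes available'' and to check that its quantum and classical instantiations are exactly QMOP and CMOP (and likewise for the decision-process question). Two subsidiary points should be stated carefully: that the reductions of \cite{Eisert12,Barry14} are many-one, so that composing with an enumeration of halting instances yields a legitimate $H_i$ labelling; and that the decidable side is non-trivially decidable, i.e.\ genuinely has instances of both answers — both are immediate from the constructions there but are needed to match the hypotheses of Theorem~\ref{theo2}. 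Once $\mathbf{Q}$, the $H_i$ labelling, and $f$ are fixed, the remaining verification that $\mathcal{D}[f]$ coincides with QMOP in $\mathrm{T}_1$ and with CMOP in $\mathrm{T}_2$ (and similarly for QOMDP/POMDP) is bookkeeping.
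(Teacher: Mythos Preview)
Your approach diverges from the paper's in two linked ways. First, the paper does not pick a generic quantum/classical distinguisher for $\mathbf{Q}$; it identifies $\mathbf{Q}$ as the specific question ``does the measurement prohibit destructive interference?'' that Eisert--M\"uller--Gogolin themselves use. That choice is the whole point: with $\mathbf{Q}$ negative (quantum) the measurement-occurrence problem becomes the matrix mortality problem (undecidable, ultimately via $\mathcal{H}$), and with $\mathbf{Q}$ affirmative (classical) it becomes the nonnegative-MMP (non-trivially decidable). Thus MMP plays the role of $\mathcal{H}$ and nonnegative-MMP the role of $\mathcal{H}_f$ in Eq.~\eqref{eu1}, and the QOMDP/POMDP pair inherits the same structure because its undecidability is proved by reduction from QMOP and its decidability again rests on nonnegativity. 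Your $\mathbf{Q}$ (joint measurability, LHV existence) is disconnected from this mechanism; with it, the bifurcation into QMOP and CMOP is not \emph{caused} by $\mathbf{Q}$ but merely indexed by it, which is a weaker statement than what the corollary asserts.

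Second, and more substantively, the paper does \emph{not} attempt to manufacture an explicit surjection $f$; on the contrary it states that ``the absence of an appropriate surjective function $f$ restrains the authors in \cite{Eisert12} to consider analogous problems only instead of the same problem.'' So the corollary is read loosely: QMOP/CMOP exhibit the logical shape of Theorem~\ref{theo2} without being a literal $\mathcal{D}[f]$. Your programme is more ambitious---you try to build $f$ by reading the same instance classically and routing its CMOP answer to fixed halting/non-halting representatives---but this presupposes exactly the ``single resource-neutral decision problem'' you flag as the main obstacle, and QMOP instances (Kraus operators with complex entries) are not CMOP instances (substochastic matrices), so the step ``compute the CMOP answer of the $i$-th instance'' is not well-defined on the image of the Eisert~\emph{et al.} reduction. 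In short: the paper's argument is shorter and less ambitious than yours, hinging on identifying the \emph{right} $\mathbf{Q}$ and then simply matching MMP/nonneg-MMP to the two branches of Eq.~\eqref{eu1}; your route would prove something stronger, but the gap you already identified is genuine and the paper explicitly sidesteps it.
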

\begin{proof}
The QMOP and CMOP ask whether any finite outcome sequence is impossible for a given measurement device (quantum and classical, respectively). To prove their results, the authors in \cite{Eisert12} ask the question `does the measurement prohibit destructive interference?'. While quantum measurements yield a negative answer to this question, the classical scenario gives an affirmative one. Then the proof proceeds by showing that a negative answer implies the reducibility of a known undecidable problem -- the matrix mortality problem (MMP) -- to the measurement occurrence problem (MOP). On the other hand, an affirmative solution equates the decidable nonnegative-MMP to the MOP. In the present context, the question they posed plays the role of problem $\mathbf{Q}$ in Eq.(\ref{eu1}). Additionally, the undecidability of MMP follows from the undecidability of $\mathcal{H}$ \cite{Halava01}, and nonnegative-MMP is non-trivially decidable in both settings \cite{Blondel97}. Therefore, the QMOP-CMOP analogues are one among the infinite decision problems generated by the structure of Theorem \ref{theo2}. However, the absence of an appropriate surjective function $f$ restrains the authors in \cite{Eisert12} to consider analogous problems only instead of the same problem. Similar arguments hold for the QOMDP-POMDP goal state reachabilities since the former’s undecidability was shown by reduction from QMOP, and the latter’s decidability eventually stems from the associated stochastic transition matrix elements being non-negative \cite{Barry14}.
\end{proof}
\begin{figure}[t!]
\centering
\includegraphics[width=\linewidth]{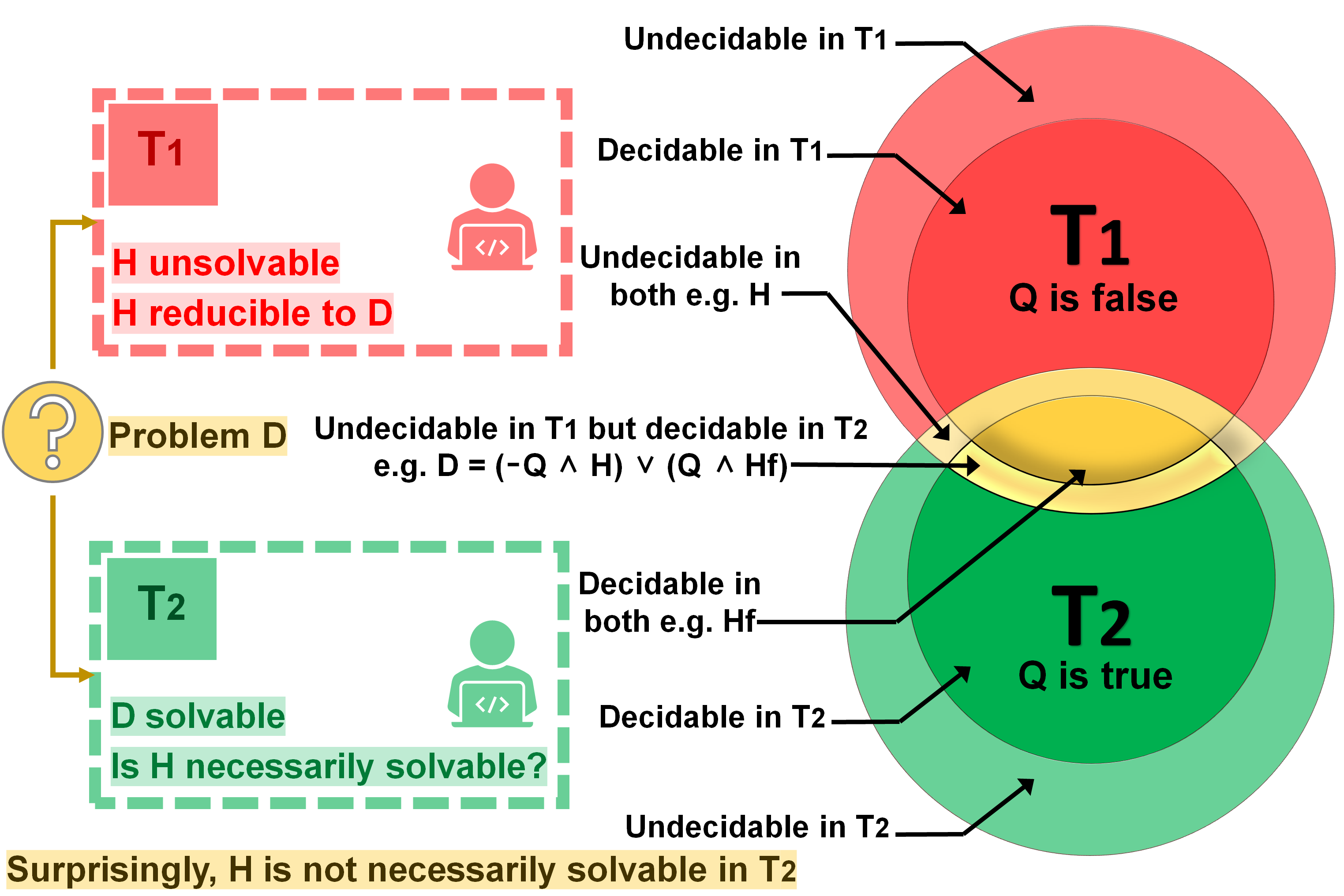}
\caption{(Color online) Computability landscape of distinct Turing theories: Decision problem $\mathcal{D}$ is undecidable in $\mathrm{T}_1$ but decidable in $\mathrm{T}_2$, although the Halting problem $\mathcal{H}$ remains undecidable in both. $\mathcal{H}_f$ is some non-trivially decidable subset of $\mathcal{H}$.}.
\label{fig1}
\end{figure}
We now go forward to show that our results can be extended beyond the confines of mathematical logic to physical scenarios. With this aim, we introduce a game named the Physical Post correspondence Game (PPCG). Its highly general rules permit the player to use both classical and quantum theories to play. Interestingly, we show that the winning strategy becomes undecidable with classical resources whereas it is perfectly decidable in quantum theory. 

{\it PPCG.--}The game involves a player, say Alice, two verifiers $V_{1}$ and $V_{2}$ prohibited from communicating with each other except in ways the rules of the game allow, and a referee. Verifier $V_{2}$ randomly generates a PCP instance and sends it to Alice. Post correspondence problem (PCP) is a familiar undecidable problem introduced in 1946 by Emil L. Post \cite{Post1946}. A PCP instance consists of some finite collection of dominoes (named \textit{\(A_1\)}, \textit{\(A_2\)} etc.) with a numerator and denominator string for each. The player should find whether the dominoes can be arranged such that the concatenated strings in the numerator and denominator become equal. Here, the strings are made over the alphabet $\{1, 2, 3, 4\}$. Upon finding a match, Alice sends it to referee; otherwise, she sends the string “no match”. 
\begin{table}[t!]
\caption{RULES of PPCG}\label{tab2}
\centering
\begin{tabular}{c}
\hline\hline
\hspace{-1.8cm}{\bf Step-$1$}:\hspace{1.3cm}{\bf Verification of devices by $V_1$}\\
\hline\hline
(i) For each domino $\textit{\(A_i\)}$, Alice approaches  $V_{1}$ with $N^{\textcolor{blue}{\star}}$\\
boxes, a device $\mathrm{M}_{ea}$ and a device $\mathrm{M}_{ix}$.\\
\hline
(ii) $V_{1}$ verifies the devices using $1/2$ of the boxes chosen\\
randomly$^{\textcolor{blue}{\mathsection}}$ and discards these boxes.\\
Other boxes are acted upon by $\mathrm{M}_{ix}$ and labelled “\textit{mixed}”.\\
\hline\hline
\hspace{-2.7cm}{\bf Step-$2$}: \hspace{1.7cm}{\bf Encoding by Alice}\\
\hline\hline
Alice selects $1/4$-th of the boxes encoding domino $\textit{\(A_i\)}$.\\
To aid this, she is free to measure any $"\textit{mixed}"$\\
boxes and relabel them with the outcomes $(\mathtt{hh}, \mathtt{ht},\cdots)^{\textcolor{blue}{\$}}$.\\
\hline\hline
\hspace{-1.9cm}{\bf Step-$3$}: \hspace{0.95cm}{\bf Probability verification by $V_1$}\\
\hline\hline
$V_{1}$ counts number $(n)$ of $\mathtt{hh}, \mathtt{ht}, \mathtt{th}~\&~ \mathtt{tt}$ of encoded boxes$^{\textcolor{blue}{\#}}$.\\
If $n(\mathtt{hh})\times n(\mathtt{tt}) = n(\mathtt{ht})\times n(\mathtt{th})$, the\\
devices and the encoded boxes are forwarded to $V_{2}$.\\
\hline\hline
\hspace{-2.05cm}{\bf Step-$4$}: \hspace{1.05cm}{\bf Encoding verification by $V_2$} \\
\hline\hline
$V_{2}$ checks encoding for numerator and denominator\\ 
using randomly chosen $1/3$ of the boxes for each$^{\textcolor{blue}{@}}$.\\
Remaining boxes are sent to referee along\\
with instruction “$\mathrm{L}$” if $k_i\leq q_i$, else with instruction “$\mathrm{R}$”$^{\textcolor{blue}{\mathparagraph}}$.\\
\hline\hline
\hspace{-1.15cm}{\bf Step-$5$}: \hspace{0.3cm}{\bf Decoding of domino strings by referee}\\
\hline\hline
Upon receiving “$\mathrm{L}$”, referee decodes the numerator first,\\
else the denominator$^{\textcolor{blue}{\dagger}}$. Boxes with outcome $\mathtt{t}$\\
are discarded. Then the other string is decoded\\
from the other compartments of the remaining boxes.\\
\hline\hline
\textcolor{blue}{$\star$} $N$ is sufficiently large for performing all steps [see~~~~\\
Appendix-\ref{appen-b} for more details].\\
\textcolor{blue}{$\mathsection$} By repeated measurements, acting $\mathrm{M}_{ix}$ on measured\\
~~~~coins and remeasuring them, arbitrarily many times.\\
~\textcolor{blue}{$\$$} Any domino can be encoded by judiciously choosing\\
a quarter of the mixed boxes. After
encoding, the~~\\
remaining boxes are discarded [see Appendix-\ref{appen-c}].~~\\
\textcolor{blue}{$\#$} Label "mixed" is counted as $1/4$-th of each outcome.\\
\textcolor{blue}{$@$} As $N$ is large enough, $1/3$-rd of boxes are sufficient.\\
~\textcolor{blue}{$\mathparagraph$} $V_{2}$ generated PCP instance and knows $k_{i}$ \& $q_{i}$ values.\\
\textcolor{blue}{$\dagger$} Respective compartments of all boxes get measured.\\
\hline
\end{tabular}
\end{table}

The game will be played using certain physical devices whose
descriptions and properties are in order. A coin is an object yielding two distinct outcomes, say head ($\mathtt{h}$) or tail ($\mathtt{t}$), when measured by a measurement device $\mathrm{M}_{ea}$. The outcome remains constant upon re-measurement(s). If a measured coin is subjected to a mixing device $\mathrm{M}_{ix}$ and re-measured, both outcomes are equally likely. There is another device called box $(\mathrm{B})$ having left $(\mathrm{L})$ and right $(\mathrm{R})$ compartments each capable of containing a coin. A box is said to be subjected to a mixing device $\mathrm{M}_{ix}$ if both of its coins are acted upon by the device. %A domino $A_i$ (say with numerator "$121$" and denominator "$34$") is said to be encoded into a set of boxes if the measurement of their $(\mathrm{L})$ and $(\mathrm{R})$ compartments gives outcome $\mathtt{h}$ with the corresponding probabilities $k_i$ and $q_i$ ($0.121$ and $0.34$). Thus, given the encoded boxes, the strings can be decoded by measuring the respective compartments.
A set of boxes can be used to encode a domino $A_i$ in the following manner: First, the numerator and denominator strings (say, "$121$" and "$34$") get converted to their probability values ($0.121$ and $0.34$) termed $k_i$ and $q_i$, respectively. Then, the boxes should be chosen such that the measurements of their $(\mathrm{L})$ and $(\mathrm{R})$ compartments give outcome $\mathtt{h}$ with the corresponding probabilities $k_i$ and $q_i$. Thus, given the encoded boxes, the strings can be decoded by measuring the compartments. The game is played following a set of rules outlined step-wise in Table \ref{tab2}, which get repeated for each domino $A_i$. Each domino of the PCP instance gets encoded into a set of boxes which undergo a series of verifications before reaching the referee who decodes them. Alice wins if the referee finds the matching order she provided to be valid for the decoded dominoes. If the solution was “no match”, the referee runs a brute-force match-finding algorithm for an arbitrary amount of time. Again, Alice wins if no match is found. The natural question that comes up is, given some PCP instance, does there exist any strategy guaranteeing the player’s victory in PPCG? In the following, we investigate the decidability status of this problem for a player with access to classical/ quantum resources.

{\it Undecidability in classical scenario.--} We start by examine whether a classical player can always ensure victory through some strategy. Denoting coin states as probability vectors, action of a classical mixing device $\mathrm{M}^{c}_{ix}$ can be represented as a stochastic matrix {\footnotesize$\begin{pmatrix}1/2 & 1/2 \\1/2 & 1/2\end{pmatrix}$} (see Appendix-\ref{appen-c} for more details). All the four outcomes obtained from a box subjected to $\mathrm{M}^c_{ix}$ are equally likely  (step-$1$), {\it i.e.}, it eliminates any correlations between the left and right coins in a box. After Alice's encoding, the probability verification in step-$3$ ensures no correlation between the coins. This, after the encoding verification by $V_2$, allows the referee to decode the domino strings accurately in step-$5$ even though he selectively discards some of the boxes depending on measurement outcomes of one compartment. In short, a classical player lacks any perfect strategy to win using some erroneous encoding. The only way left to always win is to be able to solve every instance of the undecidable PCP. Thus, PPCG problem becomes undecidable for a classical player.

{\it Quantum winning strategy.--}For a player having access to quantum resources, a two level quantum system (qubit) serves the purpose of the coin as required in the game. Assigning the role of $\mathtt{h}$ \& $\mathtt{t}$ to the computational basis ({\it i.e.} eigenstates of Pauli $\sigma_z$ operator), the action of measurement device is fulfilled by the projective measurement along $z$-direction, {\it  i.e.} $\mathrm{M}^q_{ea}\equiv\sigma_z$. The operator {\footnotesize$\begin{pmatrix}1/\sqrt{2} & -1/\sqrt{2} \\1/\sqrt{2} & ~~1/\sqrt{2}\end{pmatrix}$} accomplishes the demand  of the mixing device $\mathrm{M}^q_{ix}$ in this case. Note that, the $\mathrm{M}^q_{ix}$ satisfying the verification in step-$1$ is not unique (see Appendix-\ref{appen-d}). The player chooses this particular one to ensure her perfect winning. A generic quantum state for a box (containing two coins) is $a\ket{\mathtt{hh}}+b\ket{\mathtt{ht}}+c\ket{\mathtt{th}}+d\ket{\mathtt{tt}}\equiv(a,b,c,d)^{\mathsf{T}}$, where $a,b,c,d\in\mathbb{C}$ with $|a|^2+|b|^2+|c|^2+|d|^2=1$ and $\mathsf{T}$ denotes transposition in computational basis. One of the winning strategies for Alice involves sending the matching order “\textit{\(A_{1}\)}” as the solution to the referee regardless of the PCP instance she receives. She prepares all the boxes corresponding to domino \textit{\(A_{1}\)} in the normalized state $\ket{\psi}:= 1/2(a+b+c+d,-a+b-c+d,-a-b+c+d,a-b-c+d)^{\mathsf{T}}$. Upon the action of $\mathrm{M}^q_{ix}\otimes\mathrm{M}^q_{ix}$ in step-$1(ii)$, this state transforms into $\ket{\phi}:=(a,b,c,d)^{\mathsf{T}}$. Note that, unlike the classical case, the outcomes are not equally likely here since the quantum mixing device allows interference. For encoding, Alice chooses $1/4$ of the boxes all in state $\ket{\phi}$ without doing any further measurement or relabelling. Evidently, these boxes qualify the probability verification. The box state $\ket{\phi}$ passes the verification in step-$4$ and the referee's verification step at the end if the following conditions are satisfied: $|a|=l,~|b|=\sqrt{k_1-l^2},~|c|=\sqrt{q_1-l^2},~\&~|d|=\sqrt{1-k_1-q_1+l^2}$, where $l=k_1$ if $k_1\leq q_1$, else $l=q_1$. Infinitely many solutions are possible satisfying these conditions (see Appendix-\ref{appen-d}). The remaining dominoes $(A_2,A_3,\cdots)$ are classically encoded. This does not hinder the winning strategy as they do not appear in the matching order sent to the referee. Thus, PPCG problem is trivially decidable in quantum scenario. Manifestly, this undecidability/decidability status in classical/quantum theory follows the structure of Theorem \ref{theo1}, with problem $\mathbf{Q}$ enquiring whether interference is allowed.     

\section{Discussion} 
Interestingly, PPCG contains certain rudimentary elements of the well-known zero-knowledge proofs \cite{Barak2009} where a prover aims to convince the verifier(s) of possessing certain knowledge without revealing it. In our case, verifiers with knowledge of only classical theory are convinced that a halting problem oracle is necessary to win always. From their perspective, a player can convince them of possessing the oracle by winning consistently. The probability that a player without an oracle never loses reduces with subsequent rounds. Also, no additional information gets revealed. However, our game is not exactly at par to the standard zero-knowledge proofs \cite{Blum1988,Sahai2003,Wu2014} as it requires a trusted referee among others. Nonetheless, it shares a fundamental similarity, a dishonest classical prover (player) has no perfect winning strategy. However, a prover with quantum resources can successfully deceive the classical verifiers that she possesses a halting problem oracle.%Interestingly, the deception is revealed if the verifiers also cheat and communicate among themselves.

To summarize, in the classical scenario, the multiple PPCG verifications collectively ensure that the encoding is correct, so the classical player cannot ensure victory without knowing the PCP solution. The quantum player, however, can successfully trick the verifiers with a wrong encoding and win. Interestingly, even if $V_1$ and $V_2$ communicate and the deception gets revealed, Alice is still declared the winner as per the game rules, as she has passed all the verifications. Thus, the game remains classically-undecidable and quantum-decidable even if the verifiers are allowed to communicate. However, then the verifiers with knowledge of classical theory alone will realise that the quantum player is winning not by an oracle but by some physical theory unknown to them. A similar consequence occurs if the referee is replaced by a third verifier $V_3$ who will sense the deception when the solution $A_{1}$ gets repeated. Thus, by prohibiting the verifiers from communicating and including an impartial referee who simply follows the rules, the game retains its classical undecidability, quantum decidability and zero-knowledge character (to verifiers knowing only classical theory) but allows the quantum player to trick all the verifiers into believing that she owns an oracle.

\section{Conclusions}
Several significant problems in quantum computing and in quantum many-body physics are likely undecidable. Although few such results have been established in recent times, proving undecidability, unlike decidability, is generally laborious. Our results can help screen the likely candidates by examining their classical counterparts’ decidability and statements of distinctiveness.
Our highly general results establish deep foundational links between physics and computability. In this era of information and computation theories redefining physics, with explosive advances gained by investigating such links in the complexity sector, we expect that our results concerning its relatively less explored sister sector computability will prove beneficial.

{\bf Acknowledgement:} I convey my sincere gratitude to Dr. Manik Banik (IISER-Thiruvananthapuram) for all the discussions we had, for his time and efforts in reviewing and editing the manuscript, and most of all, for the immense motivation he provided as my supervisor. Dr. Mir Alimuddin, Edwin Lobo, Sahil Gopalkrishna Naik, Samrat Sen, and Ramkrishna Patra (IISER-Thiruvananthapuram) are all acknowledged for the productive discussions with them. I thank Dr. Christian Gogolin (Covestro Deutschland AG) and Prof. Markus P. Müller (IQOQI, Vienna) for their helpful suggestions.

\newpage
\appendix

\section{Physical theories and computation}\label{appen-a}
A question can be said to be meaningful within the context of some physical theory if it is some yes-or-no query defined using the states and operations of the theory and additional logical structures. For instance, the query, "does any allowed physical operation transform $\rho$ to $\sigma$?", is a valid question across all physical theories that contain states $\rho$ and $\sigma$. However, the solution to this can change depending on the operations available in each theory. Note that, valid questions can be defined more generally without having to specify any particular state or operation. For instance, the query, "given the rules of Bell CHSH game, does any strategy utilizing the available states and operations (allowed in a theory) offer a success probability more than $75\%$?", is a well-defined question for which the solution changes across classical and quantum theories.

A computational problem can be also viewed as a valid question posed in some physical theory provided the Turing machine required to solve it can be simulated by the theory. A physical theory is said to simulate some Turing machine if the following hold: Each input string and final state (accept or reject states) of the Turing machine is uniquely mapped to some subset of the available states in the physical theory, and each available state mapped to an input string can be transformed to the available state mapped to the corresponding final state by some finite protocol. Here, finite protocol refers to a finite number of steps where each step involves some finite number of available states, physical operations and additional logical operations. Thus, a computational problem can be said to be equivalent to a valid question regarding the transformability across states of some physical theory, provided the theory can simulate the associated Turing machine.

\section{Number of boxes required}\label{appen-b}
For the PPCG game, the number of boxes required to properly encode each domino ($N$) depends on the longest domino string in the PCP instance. Consider some string (say "$133$") to be encoded into a set of boxes. The uncertainty in the probability measured while decoding the string can be made arbitrarily small by increasing the number of boxes in the set. Let $\mathcal{N}(l)$ denote the minimum number of boxes required for reducing the uncertainty to $l$ decimal digits. This means that, using $\mathcal{N}(4)$ boxes to encode the string "$133$" will help the decoder to obtain the probability value $0.1330$ upon measurements. Terminating the string before '$0$' (since the strings are made over $\{1,2,3,4\}$), the domino string gets decoded accurately. Therefore, if $l_{max}$ is the length of the longest domino string in the whole PCP instance, the minimum number of boxes required for the referee to decode both the strings of any domino goes to $N^\prime= \mathcal{N}(l_{max}+1)/p_{min}$. Note that, $p_{min}$ is the minimum among all values of $k_i$ and $q_i$. It is included to account for the referee selectively discarding certain boxes in step-$5$ depending on measurement outcomes of one compartment. Scaling this up to step-$1$ in Table \ref{tab2}, we find that Alice should begin with a total of $N=24\times N^\prime$ boxes for the referee to end up with $N^\prime$ boxes by step-$5$ and decode every string accurately.

\section{PPCG in the classical scenario}\label{appen-c}
A generic state of a two-level classical coin $\mathcal{C}_p$ can be represented as a probability vector $\mathcal{C}_p:=p\mathcal{C}(\mathtt{h})+(1-p)\mathcal{C}(\mathtt{t})$, where $\mathcal{C}(\mathtt{h}):=(1,0)^\mathsf{T},~\mathcal{C}(\mathtt{t}):=(0,1)^\mathsf{T}$, and $p\in[0,1]$. The classical measurement device $\mathrm{M}^c_{ea}$ can be seen as a probability update rule: upon obtaining the outcome $\mathtt{h}~[\mathtt{t}]$ on some state $\mathcal{C}_p$ the updated coin state becomes $\mathcal{C}(\mathtt{h})~[\mathcal{C}(\mathtt{t})]$, {\it i.e.} 
\begin{subequations}
\begin{align}
\mathcal{C}_p\xrightarrow{\{\mathrm{M}^c_{ea},\mathtt{h}\}}\mathcal{C}(\mathtt{h}),~\mbox{when}~ p\in(0,1];\\
\mathcal{C}_p\xrightarrow{\{\mathrm{M}^c_{ea},\mathtt{t}\}}\mathcal{C}(\mathtt{t}),~\mbox{when}~ p\in[0,1).
\end{align}
\end{subequations}
Since a mixing device $\mathrm{M}^c_{ix}$ maps both the coin states $\mathcal{C}(\mathtt{h})$ and $\mathcal{C}(\mathtt{t})$ to the state $\mathcal{C}_{1/2}$, it can be represented as a stochastic matrix.
\begin{align}
\mathrm{M}^c_{ix}\equiv \frac{1}{2}\begin{pmatrix}1 & 1 \\1 & 1\end{pmatrix}.    
\end{align}

Infact, $\mathrm{M}^c_{ix}$ maps any $\mathcal{C}_p$ to the state $\mathcal{C}_{1/2}$ for $p\in[0,1]$. A box state $\mathcal{B}[\alpha,\beta,\gamma]$ can be thought as a two-coin state and can be represented as 
\begin{align}
\mathcal{B}[\alpha,\beta,\gamma]&\equiv\alpha\mathcal{C}(\mathtt{hh})+\beta\mathcal{C}(\mathtt{ht})+\gamma\mathcal{C}(\mathtt{th})\nonumber\\
&~~~~~~~+(1-\alpha-\beta-\gamma)\mathcal{C}(\mathtt{tt}),\\ \mbox{where},&~~ 0\le\alpha,\beta,\gamma,\alpha+\beta+\gamma\le1;\nonumber\\
\mathcal{C}(\mathtt{hh})&=(1,0,0,0)^\mathsf{T},~~~
\mathcal{C}(\mathtt{ht})=(0,1,0,0)^\mathsf{T},\nonumber\\
\mathcal{C}(\mathtt{th})&=(0,0,1,0)^\mathsf{T},~~~
\mathcal{C}(\mathtt{tt})=(0,0,0,1)^\mathsf{T}.\nonumber
\end{align}
Such a state generally allows correlation between two coins placed at the left and right compartments of the Box \cite{Guha2021}. However, action of the mixing device $\mathrm{M}^c_{ix}$ on both compartments kills any such correlation and the box state becomes completely mixed, {\it i.e.}
\begin{align}
\mathrm{M}^c_{ix}\otimes \mathrm{M}^c_{ix}\left(\mathcal{B}[\alpha,\beta,\gamma]\right)=\mathcal{C}^\mathrm{L}_{1/2}\otimes\mathcal{C}^\mathrm{R}_{1/2}. \label{eqb4}
\end{align}
Here, superscripts are used to indicate the state of the coins in two different compartments. For a box state $\mathcal{B}[\alpha,\beta,\gamma]$, the states of the coins in its left and right compartments read as
\begin{align*}
\mathcal{C}^\mathrm{L}_p=p\mathcal{C}(\mathtt{h})+(1-p)\mathcal{C}(\mathtt{t}),~\mbox{where},~p:= \alpha+\beta;\\
\mathcal{C}^\mathrm{R}_q=q\mathcal{C}(\mathtt{h})+(1-q)\mathcal{C}(\mathtt{t}),~\mbox{where},~q:= \alpha+\gamma.
\end{align*}
If the box state $\mathcal{B}[\alpha,\beta,\gamma]$ lacks any correlation between its left and right compartment coins, it can be written in the separable form $\mathcal{B}[\alpha,\beta,\gamma]$ = $\mathcal{C}^\mathrm{L}_p\otimes\mathcal{C}^\mathrm{R}_q$, which implies
\begin{align*}
\alpha&=(\alpha+\beta)(\alpha+\gamma),\\
\beta&=(\alpha+\beta)(1-\alpha-\gamma),\\
\gamma&=(1-\alpha-\beta)(\alpha+\gamma),\\
1&-\alpha-\beta-\gamma=(1-\alpha-\beta)(1-\alpha-\gamma).
\end{align*}
The above conditions hold if and only if $\alpha(1-\alpha-\beta-\gamma)= \beta\gamma$ which gets checked during step-$3$ of PPCG. In classical scenario, this ensures no correlation between the coins and hence prohibits any cheating strategies by the player.\\\\
{\bf Remarks:}
\begin{itemize}
\item Note that, after the action of mixing device on each compartment of a box, the box state becomes completely mixed [see Eq.(\ref{eqb4})]. So, only $1/4$-th of the mixed boxes can give any particular outcome ($\mathtt{hh}, \mathtt{ht}, \mathtt{th}$ or $\mathtt{tt}$). Therefore, in step-$2$, the required number of encoded boxes are limited to $1/4$-th of the mixed boxes to ensure that the number of boxes needed for each outcome to encode any arbitrary domino is available.
\item Since the state of a mixed box is completely mixed, during counting in step-$3$, any "mixed" label box contributes $1/4$-th to each of the outcomes.
\item Note that, even though Alice cannot cheat with the encoding, she can still send the solution "no match" when the solution is unknown to her. However, even this does not ensure her victory as the referee is given an arbitrary amount of time to try finding a match.
\end{itemize}

\section{PPCG in the quantum scenario}\label{appen-d}
In the quantum scenario, a quantum coin is a two-level quantum system (qubit) and its state can be represented as $\ket{\phi}=\sqrt{p}\ket{\mathtt{h}}+e^{\mathtt{i}\zeta} \sqrt{1-p}\ket{\mathtt{t}}\in\mathbb{C}^2$, where $p\in[0,1]$ and $\zeta\in[0,2\pi]$. Without any loss of generality we will consider the computational basis ({\it i.e.} eigenstates of Pauli $\sigma_z$ operators) as our coin basis, {\it i.e.} $\ket{\mathtt{h}}\equiv\ket{0}$ and $\ket{\mathtt{t}}\equiv\ket{1}$. Clearly, $\mathrm{M}^q_{ea}\equiv\sigma_z$ satisfies the requirements of the measurement device. Consider an operator of the following form:
\begin{align}
\mathrm{M}^q_{ix}[\chi]:=\frac{1}{\sqrt{2}}\begin{pmatrix}1 & -e^{\mathtt{i}\chi} \\e^{\mathtt{i}\chi} & 1\end{pmatrix};~~~\chi\in[0,2\pi].
\end{align}
Action of this on the state $\ket{0}$ and $\ket{1}$ yields
\begin{subequations}
\begin{align}
\mathrm{M}^q_{ix}[\chi]\ket{0}&=\frac{1}{\sqrt{2}}(\ket{0}+e^{\mathtt{i}\chi}\ket{1}),\\
\mathrm{M}^q_{ix}[\chi]\ket{1}&=\frac{1}{\sqrt{2}}(-e^{\mathtt{i}\chi}\ket{0}+\ket{1}).
\end{align}
\end{subequations}
Further measurement of $\sigma_z$ operator ({\it i.e.} quantum measurement device $\mathrm{M}^q_{ea}$) on these evolved states yields outcomes $\mathtt{h}$ and $\mathtt{t}$ with equal probability. Thus any of these parametric family of operators suffice as a quantum mixing device $\mathrm{M}^q_{ix}$. In the protocol we have considered $\chi=0$.

A generic box state in this case read as
\begin{align}
\ket{\mathcal{B}}=w\ket{\mathtt{hh}}+x\ket{\mathtt{ht}}+y\ket{\mathtt{th}}+z\ket{\mathtt{tt}}\in(\mathbb{C}^2)^{\otimes2};\\
w,x,y,z\in\mathbb{C}~\&~|w|^2+|x|^2+|y|^2+|z|^2=1.\nonumber
\end{align}
Under the action of $\mathrm{M}^q_{ix}$ on both of the coins we get
\begin{align}
&\left(\mathrm{M}^q_{ix}\otimes\mathrm{M}^q_{ix}\right)\ket{\mathcal{B}} =\frac{1}{2} \times\nonumber\\
&~[(w-x-y+z)\ket{\mathtt{hh}}+(w+x-y-z)\ket{\mathtt{ht}}\nonumber\\  &+(w-x+y-z)\ket{\mathtt{th}}+(w+x+y+z)\ket{\mathtt{tt}}].
\end{align}
Upon measuring this evolved box state in $\sigma_z$ basis on left and right coins, the probabilities for different outcomes are given by
\begin{subequations}
\begin{align}
p(\mathtt{hh})&=\frac{1}{4}\times|w-x-y+z|^2,\\
p(\mathtt{ht})&=\frac{1}{4}\times|w+x-y-z|^2,\\
p(\mathtt{th})&=\frac{1}{4}\times|w-x+y-z|^2.\\
p(\mathtt{tt})&=\frac{1}{4}\times|w+x+y+z|^2.
\end{align}
\end{subequations}
Note that, unlike the classical case [see Eq.(\ref{eqb4})], here the probabilities are not necessarily equal due to quantum interference. This makes a quantum player powerful enough to cheat with the encoding and thus ensure victory despite the fact that she passes both the device and probability verifications (step-$1$ and step-$3$). In our protocol, by the end of step-$3$, the encoded boxes for domino $\textit{\(A_1\)}$ are in the state $\ket{\phi}:=(a,b,c,d)^{\mathsf{T}}$. For this state to qualify the encoding verification next (step-$4$), $V_2$ should obtain outcome $\mathtt{h}$ upon measurement of the left and right compartments with probabilities $k_1$ and $q_1$, respectively. Hence, the following conditions must be satisfied:
\begin{align}
|a|^2 + |b|^2 = k_1~~~\&~~~
|a|^2 + |c|^2 = q_1.
\label{eqnc6}
\end{align}
Recall that verifier $V_2$ never measures both compartments of the same box. The conditions above are derived accordingly. However, this is not the case in step-$5$, where the referee measures both compartments of certain boxes. To pass this final step, the decoded numerator and denominator strings of domino $\textit{\(A_1\)}$ should be equal. To obtain such a result, the following conditions should hold.
\begin{subequations}
\begin{align}
&|a|^2 + |b|^2 = \frac{|a|^2}{|a|^2 + |b|^2}, ~\mbox{ if }~ k_1\leq{q_1}
\label{eqnc7a}\\
&|a|^2 + |c|^2 = \frac{|a|^2}{|a|^2 + |c|^2}, ~\mbox{ otherwise. }
\label{eqnc7b}
\end{align}
\end{subequations}
Note that, Eqs.(\ref{eqnc7a})-(\ref{eqnc7b}) correspond to the cases where the referee measures the left or the right compartments first. The left hand sides of both equations denote the probabilities for obtaining outcome $\mathtt{h}$ in the compartment first measured, whereas the right hand sides denote the same for the remaining compartment after those boxes with $\mathtt{t}$ outcomes get discarded [see step-$5$]. Solving Eq.(\ref{eqnc6}), Eq.(\ref{eqnc7a}), and Eq.(\ref{eqnc7b}) and using the normalization condition $|a|^2 + |b|^2 + |c|^2 + |d|^2 = 1$, we have
\begin{subequations}
\begin{align}
&|a|=l, ~~|b|=\sqrt{k_1-l^2}, ~~|c|=\sqrt{q_1-l^2},
\label{eqnc8a}\\
&~~~~~~~~~~~~~~~~~|d|=\sqrt{1-k_1-q_1+l^2}
\label{eqnc8b},\\
&\mbox{where, }~l=k_1 ~\mbox{if }~k_1\leq{q_1}, ~\mbox{else }~l=q_1. \nonumber
\end{align}
\end{subequations}
It is straightforward to see that infinitely many solutions are possible satisfying the first three conditions [in Eq.(\ref{eqnc8a})]. The same holds while considering the remaining equation also [Eq.(\ref{eqnc8b})] since the strings are made over the alphabet $\{1,2,3,4\}$ which ensures that $0 < $\{$k_1$, $q_1$\}$ < 0.5$, and the expression ($1-k_1-q_1+l^2$) remains non-negative. Thus, by preparing the boxes for domino $\textit{\(A_1\)}$ in the appropriate state $\ket{\psi}$, Alice can always ensure victory in the quantum scenario.
\bibliography{Manuscript}% Produces the bibliography via BibTeX.

\end{document}